\documentclass[conference]{IEEEtran}
\IEEEoverridecommandlockouts
\usepackage[T1]{fontenc}
\usepackage{dsfont}

\usepackage{graphicx}
\usepackage{placeins}
\usepackage{color,soul}
\usepackage{amsthm}
\usepackage{amsmath}
\usepackage{mathtools}
\usepackage{amsfonts}
\usepackage{cite}
\usepackage{amsmath,amssymb,amsfonts}
\usepackage{algorithmic}
\usepackage{graphicx}
\usepackage{textcomp}
\usepackage{xcolor}
\usepackage{color,soul}
\usepackage{amsthm}
\usepackage{subfigure}
\allowdisplaybreaks

\usepackage{mathtools}
\usepackage{verbatim}
\usepackage{bbold}

\def\dd{\mathrm{d}}

\DeclareMathOperator*{\argmin}{arg\,min}
\mathchardef\mhyphen="2D
\usepackage{newtxtext,newtxmath}
\usepackage{caption}
\usepackage{subfigure}
\usepackage{tikz}

\newtheorem{corollary}{Corollary}
\newtheorem{lemma}{Lemma}
\newtheorem{example}{Example}
\newtheorem{remark}{Remark}
\newtheorem{definition}{Definition}

\newcounter{MYtempeqncnt}

\def\ind{\mathds{1}}
\def\BibTeX{{\rm B\kern-.05em{\sc i\kern-.025em b}\kern-.08em
    T\kern-.1667em\lower.7ex\hbox{E}\kern-.125emX}}
\begin{document}
\title{ Performance Guarantees of Cellular Networks with Hardcore Regulation and Scheduling}

\author{\IEEEauthorblockN{Ke Feng}
\IEEEauthorblockA{\textit{CNRS-ETIS, ENSEA, CY Cergy Paris Université}, \\
ke.feng@cnrs.fr\\}
\and
\IEEEauthorblockN{Fran\c cois Baccelli}
\IEEEauthorblockA{\textit{INRIA-ENS and Telecom Paris},\\
francois.baccelli@inria.fr\\}
\and
\IEEEauthorblockN{Catherine Rosenberg}
\IEEEauthorblockA{\textit{University of Waterloo},\\
cath@uwaterloo.ca
}
}

\maketitle

\begin{abstract}
Providing performance guarantees is one of the {critical}  objectives of {recent and future} communication networks, toward which regulations, {i.e., constraints on key system parameters,} have played an indispensable role. This is the case for large wireless communication networks, where spatial regulations (e.g., constraints on intercell distance) have recently been shown, through a spatial network calculus, to be essential for establishing provable wireless link-level guarantees. In this work, we focus on performance guarantees for {the downlink of} cellular networks where we impose a hardcore (spatial) regulation on base station (BS) locations and evaluate {how BS scheduling (which controls which BSs can transmit at a given time) impacts performance}. Hardcore regulation is the simplest form of spatial regulation that enforces a minimal distance between any pair of transmitters in the network. Within this framework of spatial network calculus,  we first provide an upper bound on the power of total interference for a spatially regulated cellular network, and then, identify the regimes where scheduling BSs  yields {better} link-level rate guarantees compared to scenarios where base stations are always active. The hexagonal cellular network is analyzed as a special case. The results offer insights into what spatial regulations are needed, when to choose scheduling, and how to potentially reduce the network power consumption {to provide a certain target performance guarantee}. 
\end{abstract} 

\begin{IEEEkeywords}
Performance guarantees, cellular networks, spatial regulations, scheduling.\end{IEEEkeywords}

\section{Introduction}
 { The main drive for strict delay guarantees in cellular networks comes from the development of wireless mission-critical communication systems \cite{Nasrallah19rll},
by analogy with what already exists for wireline networks, with, e.g., deterministic Ethernet \cite{bouillard2018deterministic}.
There are many other motivations for  guarantees in cellular networks, such as the deployment of broadband access via Fixed Wireless Access offering minimum bit rates to households \cite{andrew}.
A spatial network calculus was recently introduced in \cite{FengBaccelli2024spatial}, where it was shown that spatial regulation leads to a class of provable  performance guarantees \textit{for all links} in an arbitrarily large and even
infinite wireless network. In contrast, wireless networks lacking such a regulation cannot provide link-level performance guarantees.
This result opens up questions about what performance guarantees are feasible under different power control and coordinated transmission schemes \cite{zhong2025spatialnetworkcalculusdeterministic,sciancalepore2018multi,6953066}. This is particularly relevant  to cellular networks, where such schemes are often employed to improve the reliability and efficiency of communications.

The present work investigates spatial regulations for cellular networks in the form of \textit{hardcore regulation} and the performance guarantees {on the downlink} obtained by scheduling base stations (BSs). 
Hardcore regulation refers to a simple and effective way of implementing spatial regulation, that is obtained through imposing a distance constraint between active transmitters (in this case, BSs) and thus capping interference \cite{FengBaccelli2024spatial}, as in the setting of carrier sense multiple access with collision avoidance (CSMA/CA) for wireless networks \cite{Nguyen07-80211,zhong2024dualzonehardcoremodelrtscts}.
BS scheduling can be implemented by for example periodically muting certain BSs on predefined time-frequency resources, which is a methodology for inter-cell interference control (ICIC) and quality of service (QoS) control in dense cellular systems \cite{sciancalepore2018multi}. In terms of spatial regulation, BS scheduling allows one to decompose the entire set of BSs into several subsets of BSs, each potentially subject to stronger hardcore constraints and hence less severe interference. {Hence, the research question is: Can BS scheduling improve the rate guarantees on the downlink of cellular networks?}

In this work, we investigate the research question using spatial network calculus. We first provide a link-level upper bound  on interference {by taking  the user association (i.e., the process by which a user is mapped to a BS for ensuring its connectivity)} into account. The bound is shown tighter than what is previously available in \cite{FengBaccelli2024spatial}. 
Secondly, we show that the answer to {our question} depends on the transmit power, or alternatively, the signal-to-interference-plus-noise ratio (SINR), and the {BS deployment}. We further identify the regimes where scheduling improves the performance guarantees provided by spatial network calculus when BSs are always active {(AA)}. In such regimes, we show that one can gain by reducing the power consumption of the network. Lastly, we apply the analysis to hexagonal cellular networks.

The remainder of the paper is organized as follows. Section \ref{sec: back} introduces the concepts of spatial regulations that is key to this paper along with extensions and results. The system model is described in Section~\ref{sec: sys-model}.
Section~\ref{sec: performance} presents the main results for general networks while Section~\ref{sec:hex} focuses on hexagonal cellular networks. Section~\ref{sec:concl} summarizes the main findings and proposes avenues for future research.



\section{{Spatial Regulations: Background and Extensions} }\label{sec: back}

{In this section, we introduce   the  concepts of spatial regulations which are key to this paper along with some extensions and results that we introduce. For a comprehensive view of spatial network calculus, see \cite{FengBaccelli2024spatial}.}

Let $\Phi$ be a stationary point process on $\mathbb{R}^2$ defined on the probability space $(\Omega, \mathcal{A},\mathbb{P})$, {which models the locations of BSs in this work}. Let $b(x,r)$ denote the open ball of radius $r>0$, centered at $x\in \mathbb{R}^2$. Let $\Phi(b(x,r))\in\mathbb{N}\triangleq\{0,1,2,...\}$ denote the number of points of $\Phi$ residing in $b(x,r)$. $\|\cdot\|$ denotes Euclidean distance. 
We say that an event holds $\mathbb{P}\mhyphen\mathrm{a.s.}$ (which stands for almost surely) to indicate that the probability of this event is 1. 
\begin{definition}[Strong $(\sigma,\rho,\nu)$-Ball Regulation \cite{FengBaccelli2024spatial}]
\label{def: ball-reg}
    A stationary point process $\Phi$ is strongly $(\sigma,\rho,\nu)$-ball regulated if, for all $R\geq0$,
\begin{equation}
   \Phi(b(o,R))\leq \sigma +\rho R+\nu R^2,\quad \mathbb{P}\mhyphen a.s.\label{eq: def-ball-reg},\nonumber
\end{equation}
where $\sigma,\rho,\nu$ are  constants and $\sigma,\nu>0$.
\end{definition}

Note that this definition is some form of two dimensional extension of the
$(\sigma,\rho)$ regulation of classical network calculus \cite{cruz}. The bound in Definition \ref{def: ball-reg} also applies to $\Phi(b(y,R)), \forall y\in\mathbb{R}^2$ due to the stationarity of $\Phi$. This ball regulation is shown in \cite{FengBaccelli2024spatial} to be a sufficient and necessary condition for link-level performance guarantees.

A stationary point process $\Phi$ with hardcore distance $H$ is a stationary process in which any two distinct points are separated by a distance of at least
$2H$. In what follows, we will also refer to such point processes as being \textit{$H$-hardcore regulated}.

\begin{definition}[$(K,H_K)$-Hardcore Regulation]
\label{def: k,h_k}
Let $\Phi$ be a stationary point process. For $K\in\mathbb{N}^+$ and $H_K\in\mathbb{R}^+$, we say that $\Phi$ is $(K,H_K)$-hardcore regulated
if there exists a stationary marking $\{M(x)\}_{x\in\Phi}$ where $M(x)\in\{1,2,...,K\}$ 
such that for all $x,y\in\Phi$ with $M(x)=M(y)$,
\[\|x-y\|\geq 2H_K, \quad \mathbb{P}\mhyphen a.s.\]
\end{definition}
Here are a few remarks about  Definition \ref{def: k,h_k}:
\begin{itemize}
    \item {A stationary marking (see e.g. \cite{BBK}) can be a function of the local geometry. Stationarity means that if one shifts the whole point process, the mark of any given point is kept unchanged.} 

  \item   Definition \ref{def: k,h_k}  extends the notion of hardcore point process. A stationary $H$-hardcore regulated point process is a special case of a $(K, H_K)$-hardcore regulated process with $K=1$, i.e., it is $(1, H)$-hardcore regulated. {As we will show in Sections \ref{sec: sys-model} and \ref{sec: performance}, this new definition provides a useful property of point processes, which allows one to evaluate their performance with scheduling. }

\item If $\Phi$ is $(K,H_K)$-hardcore regulated, it is also $(K+1,H_{K})$-hardcore regulated. Alternatively speaking, for a given $\Phi$, $H_K$ monotonically increases with $K$.

\item Evidently, for a $(K,H_K)$-hardcore regulated point process and all markings satisfying the condition in Definition \ref{def: k,h_k}, $\forall i\in\{1,2,...,K\}$, the point process $\{x\in\Phi\colon M(x)=i\}$ is $H_K$-hardcore regulated.

    \item  The Poisson point process is not $(K,H_K)$ hardcore regulated for any $K\in\mathbb{N}^+$ and $H_K\in\mathbb{R}^+$.   However, it is possible to impose $(K,H_K)$-hardcore regulation on any point process through the process of thinning. {By thinning, we mean retaining only certain points, e.g., by discarding all those violating the hardcore condition.}
\end{itemize}

 Let $\rho_H \triangleq 2\pi/(\sqrt{12}H)$ and
$\nu_H \triangleq \pi/(\sqrt{12}H^2)$.
\begin{lemma}
\label{lemma: hardcore}
  Let $\Phi$ 
be $(K,H_K)$-hardcore regulated. Then there exists a stationary marking  such that $\forall i\in\{1,2,...,K\}$, $\{x\in\Phi\colon M(x)=i\}$ is strongly $(1,\rho_{H_K},\nu_{H_K})$-ball regulated; further, for all non-negative,
 bounded, and non-increasing functions $\ell\colon \mathbb{R}^{+}\to\mathbb{R}^{+}$, and for all $R>0$,
\begin{align*}
   &\sum_{x\in\Phi\cap b(o,R),M(x)=i} \ell(\|x\|)  \\
   &\leq \ell(0)+ \rho_{H_K} \int_{0}^{R}{\ell(r)}\dd r
    +2 \nu_{H_K} \int_{0}^R r\ell(r)\dd r,\quad \mathbb{P}\mhyphen a.s.\nonumber
\end{align*}.
\end{lemma}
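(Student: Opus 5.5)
The plan is to work with the marking given by Definition \ref{def: k,h_k} and treat each color class separately. First I establish the strong ball regulation with a planar packing inequality. Then I derive the weighted-sum bound from the counting bound via a layer-cake argument, which handles arbitrary non-increasing $\ell$, including discontinuous ones.

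\textbf{Ball regulation.} Fix $i$ and let $\Phi_i=\{x\in\Phi\colon M(x)=i\}$. By the remarks after Definition \ref{def: k,h_k}, $\Phi_i$ is $H_K$-hardcore regulated: any two distinct points are at distance at least $2H_K$. I will use Oler's inequality for finite point sets in a compact convex set $D$ with pairwise distances at least $1$:
\[N\le \frac{A(D)}{2\sqrt3}+\frac{P(D)}{4}+1,\]
where $A(D)$ is the area and $P(D)$ the perimeter. Rescaling by $2H_K$ and taking $D$ the closed disk of radius $R$ gives
\[\Phi_i(\bar b(o,R))\le 1+\frac{\pi R}{4H_K}+\frac{\pi R^2}{8\sqrt3 H_K^2}.\]
Since $\rho_{H_K}=\pi/(\sqrt3 H_K)$ and $\nu_{H_K}=\pi/(2\sqrt3H_K^2)$, this is at most $1+\rho_{H_K}R+\nu_{H_K}R^2$, almost surely and for all $R$ simultaneously.

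If citing Oler feels too heavy, I would fall back on Groemer's packing inequality. However, its perimeter term $\pi R/H_K$ exceeds $\rho_{H_K}R$, so the linear coefficient must be checked carefully. Securing a packing inequality whose linear coefficient fits under $\rho_{H_K}$ is the step I expect to be the main obstacle. The stationary marking is simply the one from the definition, so nothing more is needed there.

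\textbf{Weighted sum.} Write $g(r)=1+\rho_{H_K} r+\nu_{H_K} r^2$. For $t\in(0,\ell(0))$, let $s(t)=\sup\{r\colon \ell(r)>t\}$. Because $\ell$ is non-increasing, $\{r\colon\ell(r)>t\}$ is an interval starting at $0$ with right endpoint $s(t)$. By the layer-cake formula,
\[\sum_{x\in\Phi_i\cap b(o,R)}\ell(\|x\|)=\int_0^{\ell(0)}\Phi_i\bigl(\{x\in b(o,R)\colon \ell(\|x\|)>t\}\bigr)\,\dd t\le\int_0^{\ell(0)} g\bigl(\min(s(t),R)\bigr)\,\dd t.\]
The inequality uses the closed-ball count bound, which covers the case where the endpoint $s(t)$ is included. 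Now write $g(m)=1+\int_0^m g'(r)\,\dd r$ and apply Fubini:
\[\int_0^{\ell(0)} g(\min(s(t),R))\,\dd t=\ell(0)+\int_0^R g'(r)\,\bigl|\{t\colon s(t)>r\}\bigr|\,\dd r.\]
If $s(t)>r$, then $\ell(r)>t$, so $|\{t\colon s(t)>r\}|\le \ell(r)$. Since $g'(r)=\rho_{H_K}+2\nu_{H_K}r\ge0$, this yields
\[\ell(0)+\rho_{H_K}\int_0^R\ell(r)\,\dd r+2\nu_{H_K}\int_0^R r\ell(r)\,\dd r,\]
which is the claimed bound.

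Boundedness of $\ell$ guarantees that every integral is finite. The almost-sure qualifier carries over because the count bound holds on one full-probability event for all $R$.
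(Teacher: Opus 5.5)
\textbf{Comparison with the paper.} Your route differs from the paper's. The paper proves nothing directly: it takes the marking from Definition~\ref{def: k,h_k} and cites \cite[Lemma 3]{FengBaccelli2024spatial} for the count bound and \cite[Theorem 1]{FengBaccelli2024spatial} for the weighted sum. Your layer-cake half is correct as written. The closed-ball count covers the case $\ell(s(t))>t$, the case $s(t)=\infty$ is absorbed by $\min(s(t),R)$, and $\{t\colon s(t)>r\}\subseteq(0,\ell(r))$ gives the final inequality. It is the continuous counterpart of the Abel-summation argument the paper itself uses in the proof of Lemma~\ref{lemma: conditional reg}. It has two advantages: it needs no $\ell'$ and no mesh limit, so discontinuous non-increasing $\ell$ are handled directly. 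It also works verbatim for any $C^1$ non-decreasing count bound $g$, so it would reprove the general $(\sigma,\rho,\nu)$ version. Replacing a citation by a self-contained argument is what your approach buys.

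\textbf{Error in the packing step.} You quote Oler's inequality with the wrong normalization, and this must be fixed. For points at mutual distance at least $1$ in a compact convex $D$, it reads $N\le \frac{2}{\sqrt{3}}A(D)+\frac{1}{2}P(D)+1$; this is tight for triangular patches of the unit triangular lattice. The formula you wrote is the version for mutual distance at least $2$. Rescaling it by $2H_K$ therefore treats points that are $4H_K$ apart. Your intermediate bound $1+\frac{\pi R}{4H_K}+\frac{\pi R^2}{8\sqrt{3}H_K^2}$ is false: the triangular lattice of spacing $2H_K$ puts about $\frac{\pi R^2}{2\sqrt{3}H_K^2}$ points in $\bar b(o,R)$, four times your quadratic term. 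With the correct normalization you get
\[\Phi_i(\bar b(o,R))\le 1+\frac{\pi R}{2H_K}+\frac{\pi R^2}{2\sqrt{3}H_K^2}.\]
Its quadratic coefficient equals $\nu_{H_K}$ exactly, and it cannot be smaller, since the lattice attains it asymptotically. Its linear coefficient $\pi/(2H_K)$ is below $\rho_{H_K}=\pi/(\sqrt{3}H_K)$. So your conclusion survives, the obstacle you anticipated disappears, and the Groemer fallback is unnecessary.

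For comparison, the paper's constants satisfy $\nu_{H}R^2+\rho_{H}R+\frac{\pi}{\sqrt{12}}=\frac{\pi}{\sqrt{12}}\cdot\frac{(R+H)^2}{H^2}$. This is what a $\pi/\sqrt{12}$ density bound for the disjoint disks $b(x,H_K)\subset b(o,R+H_K)$ yields, with $\sigma$ rounded up to $1$. Oler gives a slightly better linear term at the price of citing a sharper theorem.
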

\begin{proof} The first part of the statement follows from the definition of $(K,H_K)$-hardcore regulation and  \cite[Lemma 3]{FengBaccelli2024spatial}. The second part follows from \cite[Theorem 1]{FengBaccelli2024spatial}.
\end{proof}

Since $H_K$ monotonically increases with $K$ and the upper bound in Lemma \ref{lemma: hardcore} monotonically decreases in $H_K$, Lemma \ref{lemma: hardcore} implies that there is less interference in a scheduled (in this case, better separated) network.

\section{System Model}
\label{sec: sys-model}
\subsection{System Model}
Let $\Phi$ be a stationary point process on $\mathbb{R}^2$
modeling the locations of BSs in a cellular network.  Assume that the maximum transmit power of BSs is $P>0$, and that the isotropic signal attenuation is captured by a bounded, non-negative, and non-decreasing function $\ell\colon \mathbb{R}^+\to\mathbb{R}^+$. Let the additive white Gaussian noise (AWGN) power be denoted by $W>0$. We focus on the no-fading case and note that the analysis can be generalized to arbitrary fading using \cite[Theorem 2]{FengBaccelli2024spatial}.

We consider downlink transmission where the bandwidth is normalized and assume that users are associated to their closest BS, which is also known as the strongest-average-received-power association scheme. The cellular structure is therefore determined by the Voronoi tessellation of $\Phi$. For simplicity, let us further assume that each base station has one user, which can be generalized easily to the scenario where the number of users per cell is upper bounded by a fixed constant \cite[Definition 9]{FengBaccelli2024spatial}. Without loss of generality, {we consider a user at the origin} and denote its associated BS as $x_0\triangleq \argmin\{\|x\|\colon x\in\Phi\}$,
and let $d\triangleq\|x_0\|$. 

\subsection{Scheduling Schemes with Maximum Transmit Power}
Let the transmission slots be divided uniformly into periods of $K\in\mathbb{N}^+$ slots. We consider the following two cases: always active (AA) and periodic scheduling. In the former, all BSs are active in all slots. This is equivalent to the baseline regime studied in \cite{FengBaccelli2024spatial}. In the latter, in every period of $K$ slots, each base station is active and only active in one of slots determined by the scheduler. 

In both cases, define the normalized rate as
\begin{align*}
    {R} \triangleq \frac{1}{K} \sum_{i=1}^{K}{\log(1+{\rm{SINR}}_i)},
\end{align*}

where ${\rm{SINR}}_i$ the signal-to-interference-plus-noise ratio (SINR) of the user at the origin in the $i$-th slot.
The rate is normalized over the period of $K$ slots for a fair comparison of the two schemes.  

\subsubsection{Always Active}
For the AA case,
conditioned on $x_0$,  
\begin{equation}
{\rm{SINR}}_i = \frac{P\ell(d)}{P\sum_{x\in\Phi\setminus \{x_0\}}\ell(\|x\|)+W}.\nonumber
\end{equation}
${\rm{SINR}}_i$ does not change with $i$.
\subsubsection{Periodic Scheduling}
With periodic scheduling, the scheduler assigns to each BS $x\in\Phi$ a mark $M(x)$ from the index set $\{1,2,...,K\}$. $M(x)=i$ indicates that the BS $x$ is active in the $i$-th slot and muted in other slots. 
 
Conditioned on $x_0$, 
\begin{equation}
{\rm{SINR}}_i = \frac{P\ell(d)\ind(M(x_0)=i)}{P\sum_{x\in\Phi\setminus \{x_0\},M(x)=i}\ell(\|x\|)+W},\nonumber
\end{equation}
where ${\rm{SINR}}_i$ is the SINR of the user at the $i$-th slot and $\ind(\cdot)$ is the indicator function that is equal to 1 if the event happens and 0 otherwise.  If $x$ is not scheduled in the $i$-th slot, then ${\rm{SINR}}_i=0$. It is hence sufficient to consider the slot when the BS-user link is active.

\subsection{Scheduling with Reduced Power}
{
We extend the scheduling schemes described above, i.e., AA and periodic scheduling, to enable reduced BS transmit power. In this case, we denote the BS transmit power in the active slot by $P_K\leq P$. 
}
\section{Performance Guarantees of Cellular Networks}
\label{sec: performance}
In this section, we first derive a new upper bound on the interference for all users in hardcore regulated cellular networks. The bound is shown to be tighter than the existing bound in \cite{FengBaccelli2024spatial}. It is then used to derive lower bounds on the normalized rate for both the AA and periodic scheduling cases.

\subsection{Conditional Upper Bounds on Interference}
Let $\Phi$ be an $H$-hardcore regulated point process modeling BS locations. Consider the nearest BS-user association scheme, where given $x_0$, all other BSs are at least as far as $d=\|x_0\|$ from the origin. There is hence an exclusion region 
of interfering BSs, which we denote by $E $. We have
$$E = b(o,d) \cup b(x_0, 2H),$$
where the exclusion region $b(o,d)$ is due to the BS-user association scheme and the exclusion region $b(x_0, 2H)$ is due to the $H$-hardcore regulation of $\Phi$. This region is free from interferers, which leads to a tighter bound on interference. 

 Let $t \triangleq\max(d,{2{H}}-d)$. Then we have $b(o,t)\subset E$. Hence any interfering BS is at least at distance $t$.

\begin{figure}[t]
\centering
\includegraphics[width=0.95\linewidth]{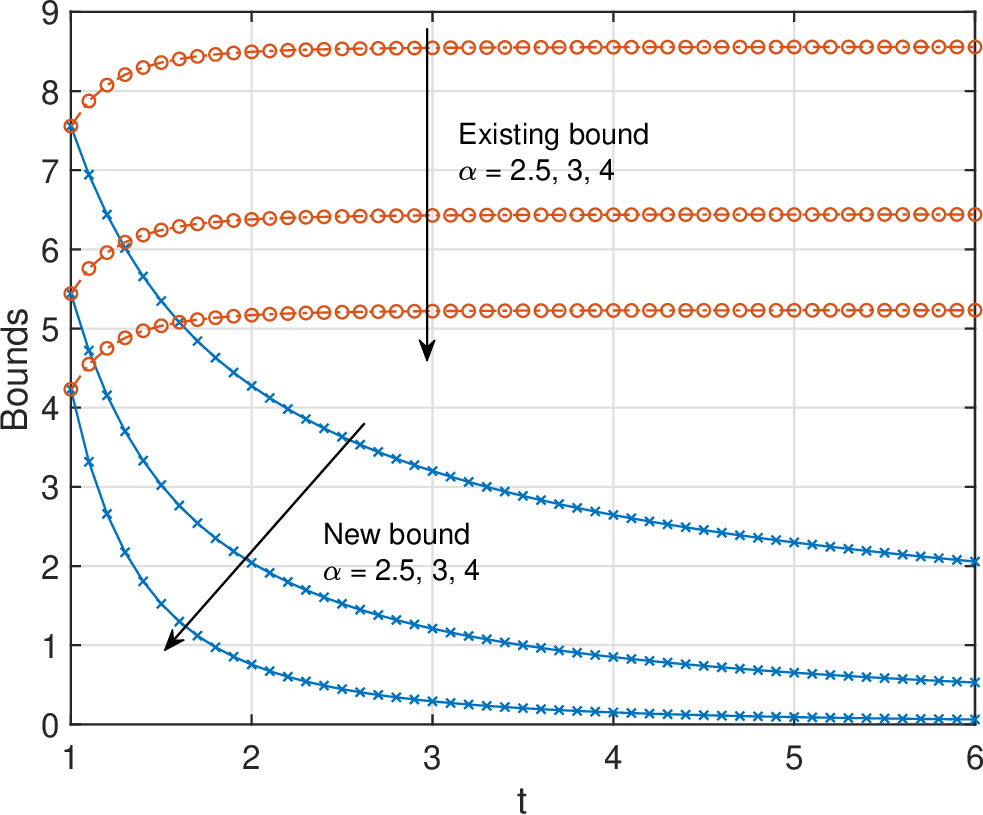}
    \caption{Comparison on the new upper bound on the total interference in Lemma \ref{lemma: conditional reg} versus the previous bound in \cite{FengBaccelli2024spatial} with path loss exponent $\alpha=2.5, 3, 4$, $P=1$, and $\ell(r)=\min\{1,r^{-\alpha}\}$.}
    \label{fig: inteference-bound}
\end{figure}

\begin{lemma} Let $\Phi$ be a $H$-hardcore-regulated  point process and let $\|x_0\|=d,$ Let $G:\mathbb{R}^+\to\mathbb{R}^+$ be a non-negative and non-decreasing function 
with countable discontinuities such that 
\[\Phi\left({b(o,R)\setminus E}\right) \leq G(R),\quad \mathbb{P}\mhyphen a.s.\]
	 For $R\geq t$, we have that, $\mathbb{P}\mhyphen a.s.$,
\begin{equation}
   \sum_{x\in\Phi\cap b(o,R)\setminus E}\ell(\|x\|) \leq -\int_{t}^{R} G(r)\ell'(r)\dd r +\ell(R) G{(R)}.\nonumber
\end{equation}
\label{lemma: conditional reg}
\end{lemma}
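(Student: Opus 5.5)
The plan is to write the sum as a Lebesgue--Stieltjes integral against the counting function of interferers and then integrate by parts. Fix a realization in the almost sure event where the bound $\Phi(b(o,r)\setminus E)\leq G(r)$ holds for all $r$. Define the counting function $N(r)\triangleq\Phi(b(o,r)\setminus E)$. It is non-decreasing, left-continuous (the balls are open), and satisfies $N(r)=0$ for $r\leq t$, since $b(o,t)\subset E$. Then
\[
\sum_{x\in\Phi\cap b(o,R)\setminus E}\ell(\|x\|)=\int_{[t,R)}\ell(r)\,\dd N(r).
\]
Here I treat $\ell$ as the path-loss function, which is non-increasing; the word ``non-decreasing'' in the system model must be a typo, since $-\ell'\geq 0$ is what the statement needs. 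I also assume $\ell$ is absolutely continuous on $[t,R]$, so that $\ell'$ makes sense, as for $\ell(r)=\min\{1,r^{-\alpha}\}$.

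\textbf{Step 1 (integration by parts).} For the Stieltjes integral of $\ell$ against the left-continuous step function $N$ on $[t,R)$,
\[
\int_{[t,R)}\ell\,\dd N=\ell(R)N(R)-\ell(t)N(t)-\int_t^R N(r)\ell'(r)\,\dd r .
\]
The boundary term at $t$ vanishes because $N(t)=0$. This identity is elementary if one writes the sum point by point: for each interferer at distance $r_x\in[t,R)$,
\[
\ell(r_x)=\ell(R)-\int_{r_x}^R\ell'(r)\,\dd r=\ell(R)+\int_t^R \ind(r>r_x)\bigl(-\ell'(r)\bigr)\,\dd r .
\]
Summing over interferers and using $\sum_x\ind(r_x<r)=N(r)$ gives the identity directly, with no regularity required of $N$. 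The only subtle point is whether a point at exactly distance $R$ is counted. It is not, since the ball is open, and the left-continuous $N$ handles this consistently.

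\textbf{Step 2 (monotone replacement).} Since $-\ell'\geq0$ and $N\leq G$ pointwise, the integral term satisfies $-\int_t^R N\ell'\,\dd r\leq -\int_t^R G\ell'\,\dd r$. Also $\ell(R)N(R)\leq \ell(R)G(R)$ because $\ell\geq0$. Combining the two gives the claim.

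\textbf{Main obstacle.} The delicate step is justifying Step 1 under the stated regularity, that is, making sense of $\ell'$ when $\ell$ is only assumed monotone. The pointwise representation above only uses the fundamental theorem of calculus for $\ell$, so I would assume absolute continuity; for a general monotone $\ell$, one would replace $-\ell'(r)\,\dd r$ by the measure $-\dd\ell$. The condition that $G$ has countably many discontinuities is then only needed for $\int G\ell'$ to be a well-defined Riemann/Lebesgue integral, which holds because $G$ is monotone and hence measurable.
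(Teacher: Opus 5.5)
Your argument is correct and rests on the same idea as the paper's proof: sum by parts, then replace the counting function by $G$ using $-\ell'\geq 0$ and $\ell(R)\geq 0$. The mechanics, however, differ.

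The paper partitions $[t,R]$ into $N$ annuli of width $\Delta=(R-t)/N$. It bounds each $\ell(\|x\|)$ by $\ell$ at the inner radius of its annulus, performs a discrete Abel summation, and substitutes $G$ at the grid radii $t+(n+1)\Delta$. It then lets $N\to\infty$, so that the Riemann--Stieltjes sums converge to $-\int_t^R G\ell'\,\dd r$.

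Your pointwise identity $\ell(r_x)=\ell(R)+\int_t^R\ind(r>r_x)\,(-\ell'(r))\,\dd r$, summed over the finitely many interferers, gives the integration-by-parts formula exactly. There is no limit to justify, and the only inequality is the final comparison $N\leq G$. It also shows that the countable-discontinuity assumption on $G$ plays no role beyond measurability, which monotonicity already provides. You also count with $\Phi(b(o,r)\setminus E)$ throughout, which the paper's displayed chain abbreviates to $\Phi(b(o,r))$.

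What the discretization buys in return is that it uses the hypothesis $\Phi(b(o,R)\setminus E)\leq G(R)$ only at countably many radii. It therefore works verbatim if that bound is read as holding almost surely for each fixed $R$. Your proof needs $N\leq G$ simultaneously for almost every $r\in[t,R]$ on a single realization. The upgrade is easy: impose the bound on a countable dense set of radii, then approximate any $r$ from below using left-continuity of $N$ and monotonicity of $G$. It still deserves a sentence rather than being folded into ``fix a realization.''
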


\begin{proof}
{{The point of $\Phi$ (excluding $x_0$) which is the closest from the receiver}}
is at distance ${t} =\max(d,{{2H}}-d)$. 
Let $N$ be a positive integer and $\Delta  = (R-t)/N $. Then $ \mathbb{P}\mhyphen a.s.,$
\begin{align}
&\sum_{x\in\Phi \cap b(o,R)\setminus E} \ell(\|x\|)\nonumber\\
&     \leq  
\ell(t)(\Phi(b(o,t+\Delta))-\Phi(b(o,t)))\nonumber\\
&\quad+
\ell(t+\Delta)(\Phi(b(o,t+2\Delta))-\Phi(b(o,t+\Delta)))+...\nonumber\\
&\quad+\ell(R-\Delta) (\Phi(b(o,R))-\Phi(b(o,R-\Delta)))\nonumber\\
& = -\ell(t)\Phi(b(o,t))+(\ell(t)-\ell(t+\Delta))\Phi(b(o,t+\Delta))+...\nonumber\\
&\quad+(\ell(R-2\Delta)-\ell(R-\Delta))\Phi(b(o,R-\Delta))\nonumber\\
&\quad+ \ell(R-\Delta)\Phi(b(o,R))\nonumber\nonumber\\
& \leq \sum_{n=0}^{N-1}(\ell(t+n\Delta)-\ell(t+(n+1)\Delta))\Phi(b(o,t+(n+1)\Delta)) \nonumber\\
&\quad+\ell(R)\Phi(b(o,R))\nonumber\\
& \leq \sum_{n=0}^{N-1}(\ell(t+n\Delta)-\ell(t+(n+1)\Delta))G(t+(n+1)\Delta) \nonumber\\
&\quad+\ell(R)G(R)\nonumber\\
& \xrightarrow{N\to\infty}  {\int_{t}^{R} -G(r)\ell'(r)\mathrm{d}r+\ell(R)G(R)}.\nonumber
\end{align}
\end{proof}

    To find the best (in the sense of achievability) $G(R)$, for every $R>0$, one needs 
the densest number of packed circles with radius $H$ in $b(o,R+H)$ with prohibited region $E$, 
a problem which is studied in discrete optimization \cite{pack,lopez2019packing}.

 For tractability, in later analyses, we will apply a slightly modified bound from Lemma \ref{lemma: hardcore}, namely $G(R) =\frac{2\pi}{\sqrt{12}H}R+\frac{\pi}{\sqrt{12}H^2}R^2$. In other words, $\sigma=0$. The previous constant $\sigma=1$ is removed to to account for the absence of the associated BS.

\begin{corollary}
\label{coro: bound-sn}
Let $\Phi$ be a stationary point process. If $\Phi$ is $H$-hardcore regulated, then for $R\geq t$,
\begin{align}
	&\sum_{x\in\Phi\cap b(o,R),x\neq x_0}\ell(\|x\|) \nonumber\\
    &\leq \int_{t}^{ R} \frac{2\pi\ell(r) }{\sqrt{12}}\left(\frac{1}{H}+\frac{r}{H^2}\right)\mathrm{d}r+\frac{\pi\ell(t)t}{\sqrt{12}}\left(\frac{2}{H}+\frac{t}{H^2}\right),\quad \mathbb{P}\mhyphen a.s.
\label{eq: new_bound_I}
\end{align}

    \end{corollary}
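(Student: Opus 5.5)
The plan is to specialize Lemma~\ref{lemma: conditional reg} to $G(R)=\rho_H R+\nu_H R^2$ with $\rho_H=2\pi/(\sqrt{12}H)$ and $\nu_H=\pi/(\sqrt{12}H^2)$, and then integrate by parts to remove the derivative of $\ell$. Throughout, $\ell$ is taken to be non-increasing, as in Lemma~\ref{lemma: hardcore}; this is the physically meaningful attenuation, even though the system model says ``non-decreasing'', and it is what the proof of Lemma~\ref{lemma: conditional reg} actually uses.

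\textbf{Step 1: admissibility of $G$.} Since $x_0\notin b(o,R)\setminus E$, the set $\Phi\cap(b(o,R)\setminus E)$ consists of points of $\Phi$ other than $x_0$. Lemma~\ref{lemma: hardcore} with $K=1$ gives $\Phi(b(o,R))\le 1+\rho_H R+\nu_H R^2$ a.s. Removing the one known point $x_0$ (which lies in $b(o,R)$ whenever $R>d$) gives $\Phi(b(o,R)\setminus\{x_0\})\le \rho_H R+\nu_H R^2$. This is exactly the $\sigma=0$ modification announced before the corollary.

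The justification for that removal is the main obstacle. A clean route is to redo the packing argument of \cite[Lemma 3]{FengBaccelli2024spatial}. The disjoint balls of radius $H$ around points of $\Phi$ in $b(o,R)$ fit inside $b(o,R+H)$, and hexagonal-packing density gives a count of at most $\pi(R+H)^2/(\sqrt{12}H^2)=\nu_H R^2+\rho_H R+\pi/\sqrt{12}$. Since $\pi/\sqrt{12}<1$ and the count is an integer, the bound stays valid after subtracting the single point $x_0$ for $R>d$. If needed, I will simply take the paper's stated modification as given. $G$ is continuous and non-decreasing, so the hypotheses of Lemma~\ref{lemma: conditional reg} hold, and the lemma gives
\[
\sum_{x\in\Phi\cap b(o,R)\setminus E}\ell(\|x\|)\le -\int_t^R G(r)\ell'(r)\,\dd r+\ell(R)G(R).
\]
The left side equals the sum over $x\neq x_0$ in the corollary, because $E$ contains no interferers by the association rule and the hardcore constraint.

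\textbf{Step 2: integration by parts.} Compute $-\int_t^R G\ell'\,\dd r=-[G\ell]_t^R+\int_t^R G'(r)\ell(r)\,\dd r$. The boundary term at $R$ cancels $\ell(R)G(R)$, leaving
\[
\int_t^R \ell(r)(\rho_H+2\nu_H r)\,\dd r+\ell(t)G(t).
\]
Substituting the constants gives $\rho_H+2\nu_H r=\frac{2\pi}{\sqrt{12}}(\frac1H+\frac r{H^2})$ and $G(t)=\frac{\pi t}{\sqrt{12}}(\frac2H+\frac t{H^2})$, which is exactly \eqref{eq: new_bound_I}.

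If $\ell$ is not differentiable, I will read $\ell'\,\dd r$ as the Stieltjes measure $\dd\ell$, consistent with the Riemann-sum limit in the proof of Lemma~\ref{lemma: conditional reg}. Alternatively, I can bypass derivatives by applying Abel summation directly to that proof's discrete sum with $G$ smooth. Either way the final expression involves only $\ell$, not $\ell'$.
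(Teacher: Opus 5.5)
Your proposal is correct and is essentially the paper's proof: you plug the $\sigma=0$ bound $G(R)=\rho_H R+\nu_H R^2$ into Lemma~\ref{lemma: conditional reg} and integrate by parts, so the boundary term at $R$ cancels $\ell(R)G(R)$ and leaves $\int_t^R G'(r)\ell(r)\,\mathrm{d}r+\ell(t)G(t)$. Removing $\sigma$ is not an obstacle, since for $R>d$ you have $x_0\in b(o,R)\cap E$ and hence $\Phi(b(o,R)\setminus E)\le\Phi(b(o,R))-1$, while for $R\le d$ the set $b(o,R)\setminus E$ is empty; you can drop the packing detour, whose finite-container density claim actually fails for small $R$ (a single disk of radius $H$ in $b(o,R+H)$ has density close to $1>\pi/\sqrt{12}$), which is exactly why Lemma~\ref{lemma: hardcore} carries $\sigma=1$.
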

\begin{proof}
This comes from integration by part, Lemma \ref{lemma: conditional reg},  and Lemma \ref{lemma: hardcore}.
\end{proof}
\begin{example}

For $\ell(r)=\min\{1,r^{-\alpha}\}$, where $\alpha$ is the path loss exponent and  given $\|x_0\|=d$, from (\ref{eq: new_bound_I}) and letting $R\to\infty$, we have 
\begin{align*}
    &\sum_{x\in\Phi, x\neq x_0}\ell(\|x\|)\\
   & \leq\frac{2\pi}{\sqrt{12}}\frac{t^{1-\alpha}}{H}\frac{\alpha}{\alpha-1} + \frac{\pi}{\sqrt{12}}\frac{t^{2-\alpha}}{H^2}\frac{\alpha}{\alpha-2},\quad \mathbb{P}\mhyphen a.s.
\end{align*}

An earlier bound on the accumulated interference which does not take into account  the exclusion region $E$ is given in \cite[Corollary 3]{FengBaccelli2024spatial} as 
\begin{align*}
 &\sum_{x\in\Phi, x\neq x_0}\ell(\|x\|)\\
  &\leq \ell(0) + \rho_H \int_{0}^{\infty}\ell(r)\dd r + 2\nu_H\int_{0}^{\infty}\ell(r)r\dd r-\ell(t) \\
 &= 1+ \frac{2\pi}{\sqrt{12}H}\frac{\alpha}{\alpha-1}+ \frac{\pi}{\sqrt{12}H^2}\frac{\alpha}{\alpha-2} - \min\{1,t^{-\alpha}\}.
\end{align*}
\end{example}

Fig. 1 compares these two bounds for $H=1$, $d=1$, and $\ell(r)=\min\{1,r^{-\alpha}\}$ with $\alpha=2.5, 3, 4$. Observe that the new bound monotonically decreases with $t$, which indicates that there is less interference as the size of the exclusion region expands. In contrast, the earlier bound in \cite{FengBaccelli2024spatial} fails to take the impact of the exclusion region into account. {This improved bound is instrumental in the results of the next sections.}

\begin{figure}[t]
    \centering
\includegraphics[width=\linewidth]{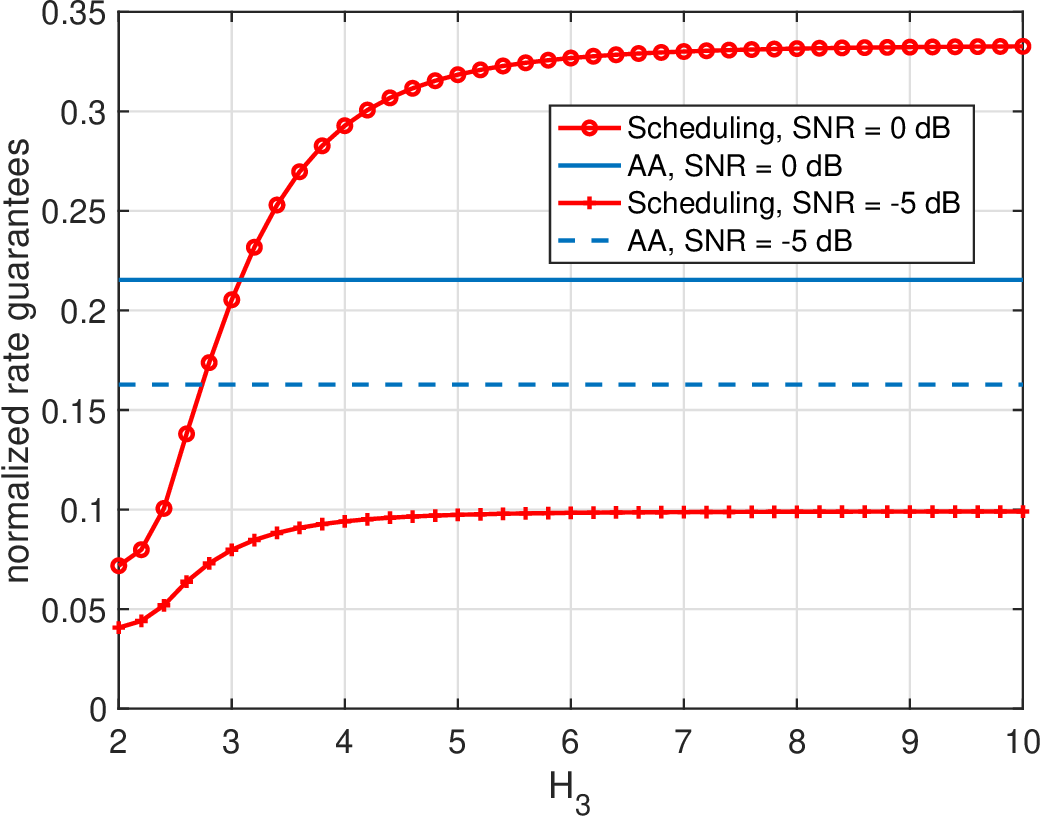}
    \caption{For $K=3$, guaranteed normalized rates for scheduling and always active (AA) vs. $H_3$. $H_1=2$. $d=4/\sqrt{3}$.  $\ell(r)=\min\{1,r^{-4}\}$.}
    \label{fig: k=2}
\end{figure}
\begin{figure}[t]
    \centering
\includegraphics[width=\linewidth]{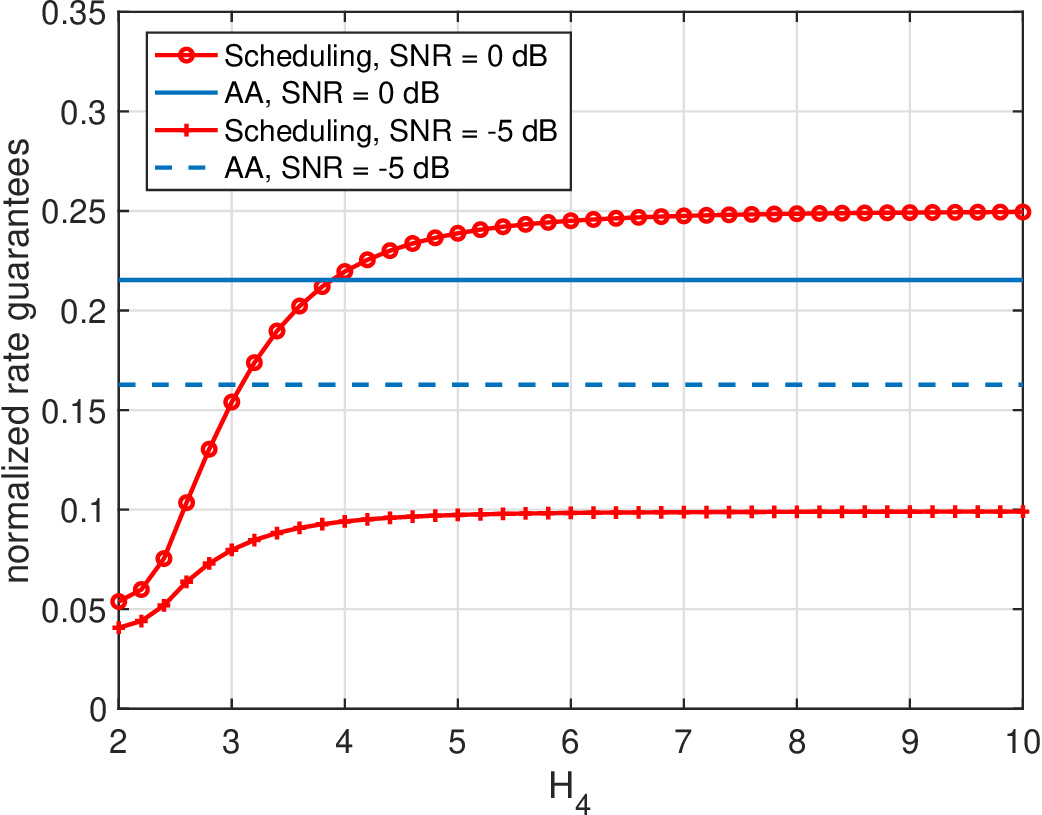}
    \caption{For $K=4$, guaranteed normalized rates for scheduling and always active (AA) vs. $H_4$. $H_1=2$. $d=4/\sqrt{3}$. $\ell(r)=\min\{1,r^{-4}\}$.}
    \label{fig: k=3}
\end{figure}

\subsection{Scheduling Schemes with Maximum Transmit Power}
Consider the cellular setting as described in Section \ref{sec: sys-model} and assume that all active BSs transmit with maximum transmit power $P$. Let $\Phi$ be $H$-hardcore regulated and $(K,H_K)$-hardcore regulated.  We assume that, for the given $K$, either $H_K$ or a lower bound of $H_K$ is known.

\subsubsection{Always Active (AA)}
\begin{corollary}
If $\Phi$ is $H$-hardcore regulated and BSs are always active, given $\|x_0\|=d,$  $\forall i\in\{1,2,...,K\}$, we have
\begin{equation}
\label{eq: sinr-bound}
    {\rm{SINR}}_i\geq  \theta(P,H),\quad \mathbb{P}\mhyphen a.s.\nonumber
\end{equation}
and $R\geq \log(1+\theta(P,H)), \mathbb{P}\mhyphen a.s.,$
where
\begin{align}
&\theta(P,H)\nonumber\\
&\triangleq
\frac{P\ell(d)}{P\int_{t}^{\infty} \frac{2\pi\ell(r) }{\sqrt{12}}\left(\frac{1}{H}+\frac{r}{H^2}\right)\mathrm{d}r+P\frac{\pi\ell(t)t}{\sqrt{12}}\left(\frac{2}{H}+\frac{t}{H^2}\right)+W}.
\label{eq: theta}
\end{align}
\end{corollary}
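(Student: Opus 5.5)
The corollary follows by combining the AA expression for ${\rm SINR}_i$ with Corollary \ref{coro: bound-sn}, after letting $R\to\infty$.

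\textbf{Step 1 (interference bound).} Condition on $x_0$ with $\|x_0\|=d$, and write $t=\max(d,2H-d)$ as before. In the AA case every $x\in\Phi\setminus\{x_0\}$ interferes, so the denominator of ${\rm SINR}_i$ is $P\,I+W$ with $I=\sum_{x\in\Phi\setminus\{x_0\}}\ell(\|x\|)$. For every $R\geq t$, Corollary \ref{coro: bound-sn} bounds the partial sum $I_R=\sum_{x\in\Phi\cap b(o,R),x\neq x_0}\ell(\|x\|)$ almost surely by
\[
B(R)\triangleq\int_{t}^{R} \frac{2\pi\ell(r)}{\sqrt{12}}\Big(\frac{1}{H}+\frac{r}{H^2}\Big)\mathrm{d}r+\frac{\pi\ell(t)t}{\sqrt{12}}\Big(\frac{2}{H}+\frac{t}{H^2}\Big).
\]

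\textbf{Step 2 (passage to the limit).} Since $\ell\geq 0$, $I_R$ is nondecreasing in $R$ and converges to $I$ by monotone convergence. Likewise $B(R)$ is nondecreasing and converges to $B(\infty)$, the expression appearing in \eqref{eq: theta} divided by $P$.

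The one point needing care is that the almost-sure statement must hold simultaneously for all $R$. I would handle this by taking $R$ along the countable sequence $R_n=t+n$: a countable intersection of probability-one events has probability one. On that event, $I=\lim_n I_{R_n}\leq \lim_n B(R_n)=B(\infty)$. If $B(\infty)=\infty$, then $\theta(P,H)=0$ and the claim is trivial.

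\textbf{Step 3 (SINR and rate).} The map $z\mapsto P\ell(d)/(Pz+W)$ is nonincreasing in $z\geq 0$. With $W>0$ the denominator is positive, so
\[
{\rm SINR}_i=\frac{P\ell(d)}{PI+W}\geq \frac{P\ell(d)}{PB(\infty)+W}=\theta(P,H)
\]
almost surely, for every $i$. Because $\log(1+\cdot)$ is increasing, each of the $K$ terms in $R=\frac1K\sum_i\log(1+{\rm SINR}_i)$ is at least $\log(1+\theta(P,H))$, hence so is their average.

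I expect no real obstacle beyond the limit/almost-sure bookkeeping in Step 2. Everything else is direct substitution of Corollary \ref{coro: bound-sn}, which already incorporates the exclusion region $E$ through $t$.
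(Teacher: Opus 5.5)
Your proposal is correct and follows the paper's own proof, which simply plugs the bound of Corollary \ref{coro: bound-sn} into the AA SINR expression and lets $R\to\infty$. Your countable-sequence argument for the almost-sure limit and your use of the monotonicity of $z\mapsto P\ell(d)/(Pz+W)$ make explicit the bookkeeping the paper leaves implicit.
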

\begin{proof}
 Follows from the definition of the SINR and the upper bound 
 in Corollary \ref{coro: bound-sn} by letting $R\to\infty$.
\end{proof}
In this case, given $d$ and $P$, a sufficient $H$-hardcore regulation for $\theta$ to be a lower bound (in the a.s. sense) for the SINR is that $H$ is chosen such that $\theta(P,H)=\theta$. For such an $H$ to exist, one needs that $P\ell(d)/W>\theta$.

 \subsubsection{Periodic Scheduling}
 Let $\Phi$ be $(K,H_K)$-hardcore regulated. Then there exists a scheduling algorithm such that given $\|x_0\|=d,$ and $M(x_0)=i$, we have
\begin{equation}
\label{eq: sinr-bound}
    {\rm{SINR}}_i\geq  \theta(P,H_K),\quad \mathbb{P}\mhyphen a.s.\nonumber
\end{equation}
   \[{R} \geq  \frac{1}{K}\log\left(1+\theta(P,H_K)\right),\quad \mathbb{P}\mhyphen a.s.\]
  
By the monotonicity of $H_K$ in $K$, with scheduling, the SINR in the active slot is improved compared to the AA case. On the other hand, for  the normalized rate, there exists a tradeoff between the reduced number of allocated slots and the improved SINR due to better separation. This tradeoff is captured by the following.
    \begin{figure}[t]
    \centering
\includegraphics[width=.48\textwidth]{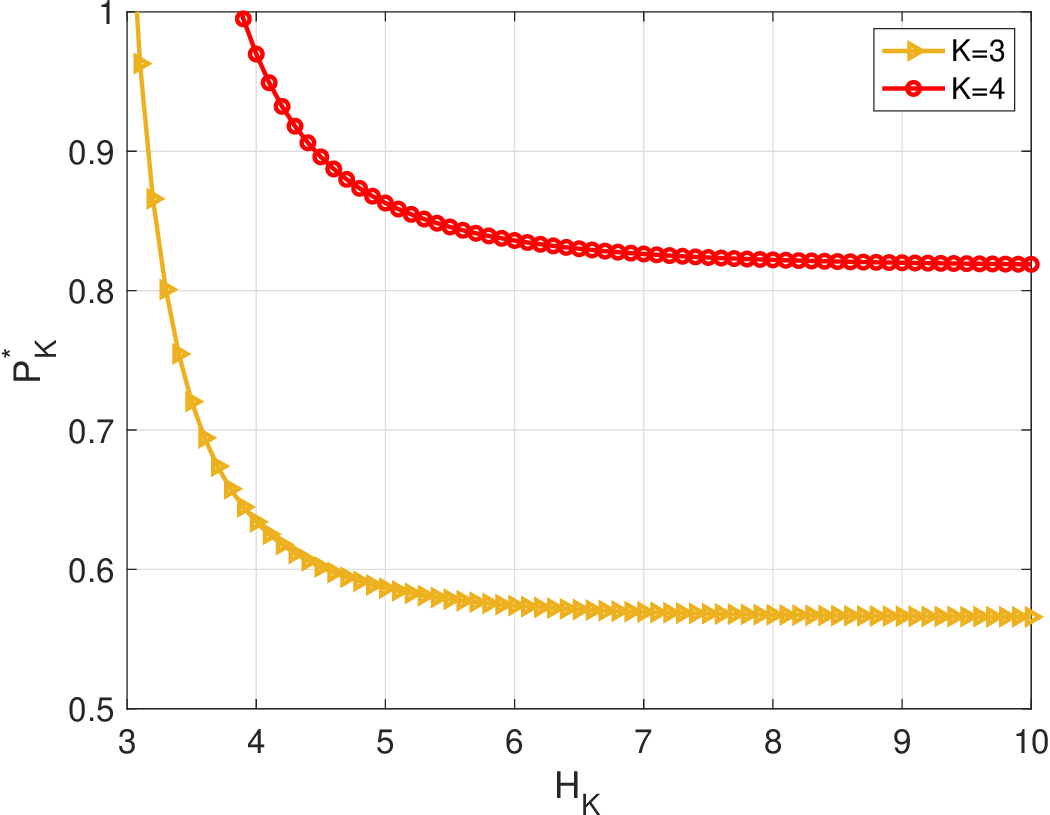}
    \caption{Reduced power $P_K^*$ vs. hardcore distance $H_K$ as in Eq (\ref{eq: P_critical}). $P=1$. SNR = 0dB. $H_1=2$. $d=4/\sqrt{3}$. $\ell(r)=\min\{1,r^{-4}\}$.}
    \label{fig:critical_P_K}
\end{figure}

\begin{lemma}
\label{def: criticality} 
Let $\Phi$ be $H$-hardcore regulated. For a given $K\in\mathbb{N}^+$,  there exists a unique solution $H_K^*\in\mathbb{R}^+$ to the following equation
\begin{equation}
\frac{1}{K}{\log(1+\theta(P,H_K))}=  {\log(1+\theta(P,H))},
\label{eq: criticality}
\end{equation}    
if
\begin{equation}
    \frac{\log\left(1+{\rm{SNR}}\right)}{\log(1+\theta(P,H))} \geq K,\nonumber
\end{equation}
where $\theta(P,H)$ is given in (\ref{eq: theta})
{and SNR is the signal-to-noise ratio  of the user given by
$${\rm{SNR}}\triangleq \frac{P\ell(d)}{W}.$$ }
\end{lemma}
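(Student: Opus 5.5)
Write $f(h)\triangleq \theta(P,h)$ for $h>0$, with $d$, $P$, $W$ and $\ell$ fixed, and $t(h)=\max(d,2h-d)$. The plan is an intermediate value argument on $h\mapsto \frac{1}{K}\log(1+f(h))$. We show this function is continuous and strictly increasing, takes at most the value $\log(1+\theta(P,H))$ at $h=H$, and tends to $\frac1K\log(1+{\rm SNR})$ as $h\to\infty$. The hypothesis $\log(1+{\rm SNR})\ge K\log(1+\theta(P,H))$ then places the target value in the range of the function, so there is a crossing point $H_K^*$. Strict monotonicity gives uniqueness.

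\textbf{Step 1 (monotonicity and continuity).} Denote by $I(h)$ the interference term in the denominator of (\ref{eq: theta}):
\begin{equation*}
I(h)=\int_{t(h)}^{\infty} \frac{2\pi\ell(r)}{\sqrt{12}}\Bigl(\frac1h+\frac{r}{h^2}\Bigr)\dd r+\frac{\pi\ell(t(h))t(h)}{\sqrt{12}}\Bigl(\frac2h+\frac{t(h)}{h^2}\Bigr).
\end{equation*}
\begin{itemize}
\item The integrand is decreasing in $h$ for fixed $r$.
\item $t(h)$ is non-decreasing in $h$, so the integral shrinks.
\item For the boundary term, use the integration-by-parts identity behind Corollary \ref{coro: bound-sn}. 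Since $\ell$ is non-increasing, $I(h)=-\int_{t}^{\infty}G_h(r)\ell'(r)\dd r$ with $G_h(r)=\rho_h r+\nu_h r^2$, viewed as a Stieltjes integral against the non-negative measure $-\dd\ell$. This assumes $\ell(r)G_h(r)\to0$, which is needed for $\theta$ to be finite anyway.
\item $G_h(r)$ is strictly decreasing in $h$, and raising $t$ removes mass from a non-negative integral.
\end{itemize}
Hence $I$ is non-increasing, and strictly decreasing wherever $-\dd\ell$ has mass beyond $t(h)$. Therefore $f$ is non-decreasing. Continuity of $I$ follows from dominated convergence together with continuity of $t(h)$.

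\textbf{Step 2 (endpoints).} At $h=H$ the left side of (\ref{eq: criticality}) equals $\frac1K\log(1+\theta(P,H))\le\log(1+\theta(P,H))$. As $h\to\infty$, $t(h)\to\infty$ and $\rho_h,\nu_h\to0$, so the representation $-\int_t^\infty G_h\,\dd\ell$ tends to $0$. Then $f(h)\to P\ell(d)/W={\rm SNR}$, and the left side tends to $\frac1K\log(1+{\rm SNR})\ge\log(1+\theta(P,H))$ by hypothesis. The intermediate value theorem yields $H_K^*\in[H,\infty)$.

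\textbf{Main obstacle.} The hard part is Step 2 combined with strictness.
\begin{itemize}
\item In the equality case ${\rm SNR}=(1+\theta(P,H))^K-1$, the sup is attained only in the limit $h\to\infty$. Uniqueness (and existence as a finite $H_K^*$) requires strict increase and a finite crossing, so this boundary case needs either a convention ($H_K^*=\infty$) or a reading of the hypothesis as strict.
\item Strictness requires $\ell$ not to be eventually constant (e.g.\ $\ell$ strictly positive and decreasing at infinity, as for $\min\{1,r^{-\alpha}\}$).
\end{itemize}
I would state these mild conditions explicitly and handle the case $K=1$ separately: there $H_1^*=H$, unique by strict monotonicity.
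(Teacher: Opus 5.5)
Your proposal is correct and follows the same route as the paper. The paper argues that $\log(1+\theta(P,H_K))$ increases toward $\log(1+{\rm SNR})$, reached only when $H_K=\infty$, and that monotonicity of $\theta(P,\cdot)$ gives uniqueness. You fill in the steps it leaves implicit: continuity for the intermediate value argument, monotonicity of the boundary term via $I(h)=-\int_{t(h)}^{\infty}G_h\,\dd\ell$, and the check at $h=H$. Your caveat about the equality case is justified, because the paper's own proof concedes that the crossing is then attained only at $H_K=\infty$, so a finite $H_K^*$ requires the hypothesis to hold strictly. Your continuity step also tacitly uses continuity of $\ell$; the paper assumes this implicitly as well by writing $\ell'$ in Lemma~\ref{lemma: conditional reg}.
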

\begin{proof}
     This is because $\log(1+\theta(P,H_K))\leq \log\left(1+{\rm{SNR}}\right) =\log(1+{P\ell(d)}/{W})$ for all $H_K$, and the equality is achieved when $H_K=\infty$.
    For the noise-free (or interference-limited) case, critical regulation always exists since $\rm{SNR} = \infty$ and $\log(1+\theta(P,H_K))$ is positive. The uniqueness of the solution, if it exists, follows from the monotonicity of $\theta(P, H_K)$ with respect to $H_K$.
\end{proof}

We refer to the solution of (\ref{eq: criticality}) in terms of $H_K$ as the critical hardcore regulation distance for $K$, denoted by $H_K^*$. 

\begin{figure*}[t!]
\normalsize
\setcounter{MYtempeqncnt}{\value{equation}}
\setcounter{equation}{\value{equation}}
    \begin{equation}
    \label{eq: P_critical}
        P_K^*=\frac{W}{\frac{\ell(d)}{(1+\theta(P,H))^K-1}-\int_{t}^{\infty} \frac{2\pi\ell(r) }{\sqrt{12}}\left(\frac{1}{H_K}+\frac{r}{H_K^2}\right)\mathrm{d}r-\frac{\pi\ell(t)t}{\sqrt{12}}\left(\frac{2}{H_K}+\frac{t}{H_K^2}\right)}.
    \end{equation}
    \end{figure*}
\begin{remark}
    Lemma \ref{def: criticality} identifies the regulation regime where scheduling can yield the same normalized rate guarantee as that of AA, hence the notion of criticality. For a given $\Phi$ that is $(K,H_K)$-hardcore regulated, if $H_K>H_K^*$, scheduling can provide a better guaranteed normalized rate; Otherwise, if $H_K<H_K^*$, then scheduling degrades the guaranteed normalized rate.
\end{remark}

Figs. \ref{fig: k=2} and \ref{fig: k=3} show the performance guarantees of the normalized rate for $H_1=2$, $d=4/\sqrt{3}$, in the relatively high SNR regime (SNR = 0dB), and relatively low SNR regime (SNR = -5dB), with $K=3,4$ respectively. $\ell(r)=\min\{1,r^{-\alpha}\}$ where $\alpha=4$. In the higher SNR regime, there exists an operation region of $H_K$ such that for $K=3,4$, scheduling under stronger hardcore regulation yields better rate guarantees. For example, the critical hardcore regulations for $K=3,4$ are $H_3^*
\approx3.15$ and $H_4^*\approx 3.8$. In contrast, in the lower SNR regime, AA always yields better bound regardless of $K,H_K$.
Intuitively, scheduling eliminates the interference from nearby BSs, which is particularly beneficial when interference is the primary limiting factor for the SINR. In contrast, when noise is more dominant, the normalization factor introduced by scheduling outweighs the benefits of reduced interference, hence reducing the normalized rate. 

\begin{figure}[t]
    \centering
\includegraphics[width=.48\textwidth]{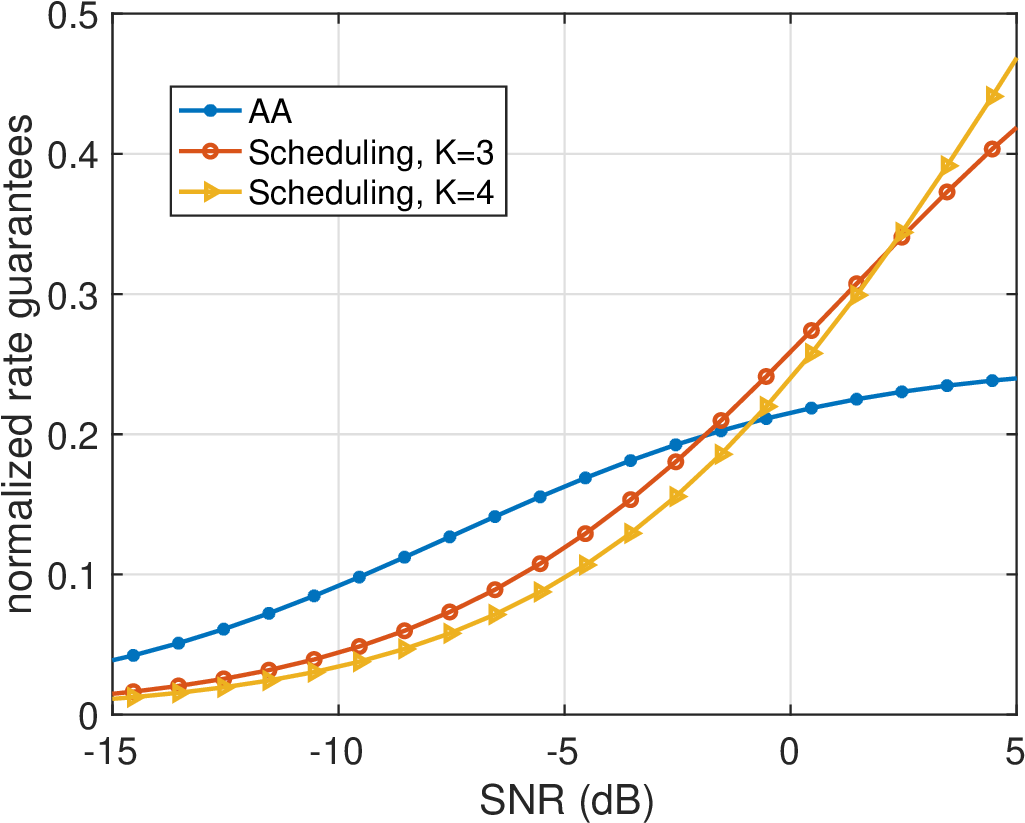}
    \caption{Normalized rate guarantees for scheduling
and always active (AA) for the hexagonal cellular structure vs. SNR. $a=4/\sqrt{3}, H_1=2$. $\ell(r)=\min\{1,r^{-4}\}$.}
    \label{fig:hex-bounds-1}
\end{figure}
\subsection{Scheduling Schemes with Reduced Power} 
\begin{corollary}
\label{cor: P_critical}    
For a given $H,P$ and $K$, if $\Phi$ is $(K,H_K)$-hardcore regulated and such that for $H_K\geq H_K^*$, the transmit power can be reduced provided $P_K\geq P_K^*$ without lowering the normalized rate guarantee provided by always active, where $P_K^*$ is given in Eq (\ref{eq: P_critical}).
\end{corollary}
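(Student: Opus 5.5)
The plan is to write the guaranteed normalized rate under periodic scheduling with reduced power $P_K$ explicitly, set it equal to the always-active guarantee, and solve for the power. Applying Corollary \ref{coro: bound-sn} to the sub-process $\{x\in\Phi\colon M(x)=i\}$, which is $H_K$-hardcore regulated (Lemma \ref{lemma: hardcore} and the remarks after Definition \ref{def: k,h_k}), and letting $R\to\infty$ gives, in the active slot $M(x_0)=i$,
\[
{\rm SINR}_i\geq \theta(P_K,H_K)=\frac{P_K\ell(d)}{P_K I(H_K)+W},\quad \mathbb{P}\mhyphen a.s.,
\]
where $I(H_K)$ is the bracketed interference bound of (\ref{eq: theta}) with $H$ replaced by $H_K$. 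All active BSs use power $P_K$, so $P_K$ scales both signal and interference. Hence $R\geq \frac1K\log(1+\theta(P_K,H_K))$ almost surely. On the AA side, the guarantee $\log(1+\theta(P,H))$ is the one in the AA corollary.

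Next, I would require $\frac1K\log(1+\theta(P_K,H_K))\geq \log(1+\theta(P,H))$. This is equivalent to $\theta(P_K,H_K)\geq c_K\triangleq (1+\theta(P,H))^K-1$. The map $P_K\mapsto \theta(P_K,H_K)=\ell(d)/(I(H_K)+W/P_K)$ is continuous and increasing, so the condition becomes
\[
\frac{W}{P_K}\leq \frac{\ell(d)}{c_K}-I(H_K).
\]
If the right-hand side is positive, this is exactly $P_K\geq P_K^*$ with $P_K^*$ as in (\ref{eq: P_critical}).

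The remaining point, which I expect to be the only delicate one, is to show that the denominator of $P_K^*$ is positive and that $P_K^*\leq P$ precisely when $H_K\geq H_K^*$. Then the reduced power $P_K\in[P_K^*,P]$ is admissible. By Lemma \ref{def: criticality} and the definition of $H_K^*$, we have $\theta(P,H_K^*)=c_K$. Since $\theta(P,\cdot)$ is increasing and $I(\cdot)$ is decreasing in the hardcore distance, $H_K\geq H_K^*$ gives $\theta(P,H_K)\geq c_K$. This means $\ell(d)/(I(H_K)+W/P)\geq c_K$, i.e. $\ell(d)/c_K-I(H_K)\geq W/P>0$. This yields positivity of the denominator and $P_K^*\leq P$ at once, with equality at $H_K=H_K^*$. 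Finally, the interference bound depends on $t=\max(d,2H-d)$, and the plan keeps the same $t$ in both bounds, as the paper does in (\ref{eq: P_critical}). This is legitimate because the bound for the sub-process remains valid with the exclusion radius $t$, since $H_K\geq H$.
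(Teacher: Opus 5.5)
Your proposal is correct and follows the paper's own route: solve $\frac{1}{K}\log(1+\theta(P_K,H_K))\ge\log(1+\theta(P,H))$ for $P_K$, then use $H_K\ge H_K^*$ and the monotonicity of $\theta(P,\cdot)$ to get a positive denominator and $P_K^*\le P$; you also supply the monotonicity and positivity details that the paper's two-line proof leaves implicit. One correction to your final remark: the paper's reported values ($P_3^*=0.7315$, $P_4^*=0.9698$) are obtained with the scheduled sub-process's own exclusion radius $\max(d,2H_K-d)$ in place of $t$, not with the fixed $t$, but your fixed-$t$ version is still a valid (weaker) bound---the scheduled interferers are points of the $H$-hardcore $\Phi$, so they lie outside $b(o,t)$ whether or not $H_K\ge H$---and the argument goes through unchanged with either choice, provided the same one is used when defining $H_K^*$ and $P_K^*$.
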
 
\begin{proof}
Eq (\ref{eq: P_critical}) follows directly from solving $\log(1+\theta(P_K^*,H_K)) = K\log(1+\theta(P,H))$. The condition $H_K\geq H_K^*$ guarantees that $P_K^*\leq P$. \end{proof}

\begin{remark}
   From Corollary \ref{cor: P_critical}, scheduling is a potential methodology to reduce the power consumption of a cellular network while maintaining desired rate guarantees.
\end{remark}
Fig. \ref{fig:critical_P_K} plots Eq (\ref{eq: P_critical}), the reduced power $P_K^*$  vs. hardcore distance for equal normalized rate guarantees with scheduling $K=3,4$ respectively, for $P=1$, SNR = 0dB, $H_1=2$, $d=4/\sqrt{3}$, and $\ell(r)=\min\{1,r^{-4}\}$.

\section{Hexagonal Cellular Networks}\label{sec:hex}
Let the locations of BSs $\Phi$ follow a triangular lattice, resulting in a hexagonal cellular structure. Let the edge length of the cells be $a$ and $\ell(r)=\min\{1,r^{-\alpha}\}$. For the lower bound on SINR for all users in the network, we focus on the worst-case user located at a cell vertex, i.e., $d=a$.
We compare the performance bounds under always active, periodic BS scheduling with three slots, and periodic BS scheduling with four slots with different SNR. For $K=1$, $H_1=\sqrt{3}a/2$. For $K=3$ and $K=4$, $H_3 = 3a/2$ and $H_4 = 2\sqrt{3}a$, the scheduling patterns follow the well-known optimal colorings of the hexagonal lattice using three and four colors, respectively.

 Fig. \ref{fig:hex-bounds-1} illustrates the normalized rate guarantees across different SNR levels at the vertex user. We take $P=1$, $a=4/\sqrt{3}$, $H_1=2$, and $\ell(r)=\min\{1,r^{-4}\}$. In the high-SNR regime, for example, at SNR $=0$ dB, $H_3=2\sqrt{3}>H_3^*\approx 3.15$ and $H_4=4>H_4^*\approx3.8$, hence scheduling with $K=3$ or $K=4$ improves the guaranteed normalized rate; in this regime, $P_3^*=0.7315$ and $P_4^*=0.9698$. When the SNR increases further, scheduling with $K = 4$ provides the best lower bound on the normalized rate.  In contrast, in the low-SNR regime, it is optimal for BSs to remain always active.

\section{Discussions and Conclusions}\label{sec:concl}

This work gives an improved upper bound on interference and lower bound on rate guarantees for all users in cellular networks with hardcore regulation and scheduling. Within the framework of spatial network calculus, it reveals the potential gain of scheduling in terms of rate guarantees and power reduction. 
There are interesting future directions for this work. The first is to study whether a critical $(K,H_K^*)$ hardcore regulation can be implemented for a given network deployment. If not, it would be valuable to determine the optimal regulation—via thinning—that maximizes BS intensity while satisfying either a target $(K, H_K)$ constraint or a desired performance lower bound. 
 Another direction is to incorporate traffic patterns, whether deterministic or stochastic, to study how scheduling interacts with traffic dynamics and influences overall network capacity. Lastly, since this work focuses on the no-fading scenario, it is essential to extend the performance analysis to account for more general fading scenarios.

\section*{Acknowledgements}
The work of F.B. and K.F. was supported by the European
Research Council project titled NEMO, under grant ERC
788851, and by the French National Agency for
Research project titled France 2030 PEPR réseaux du Futur
under grant ANR-22-PEFT-0010.

\bibliographystyle{IEEEtran}

\bibliography{ref,bookref}

@ARTICLE{andrew,
  author={Lappalainen, Andrew and Zhang, Yuhao and Rosenberg, Catherine},
  journal={IEEE Transactions on Network and Service Management}, 
  title={Planning {5G} Networks for Rural Fixed Wireless Access}, 
  year={2023},
  volume={20},
  number={1},
  pages={441-455},
  }

@book{BBK,
    author = {Baccelli, Francois and Blaszczyszyn, Bartek and Karray, Mohamed},
    title = {Random Measures, Point Processes, and Stochastic Geometry},
    publisher = {Online preprint, https://hal.inria.fr/hal-02460214},
    year = 2024
}

@INPROCEEDINGS{Nguyen07-80211,
  author={Nguyen, H. Q. and Baccelli, F. and Kofman, D.},
  booktitle={IEEE INFOCOM 2007 - 26th IEEE International Conference on Computer Communications}, 
  title={A Stochastic Geometry Analysis of Dense {IEEE} 802.11 Networks}, 
  year={2007},
  volume={},
  number={},
  pages={1199-1207},
  doi={10.1109/INFCOM.2007.143}}

@article{sciancalepore2018multi,
  title={A multi-traffic inter-cell interference coordination scheme in dense cellular networks},
  author={Sciancalepore, Vincenzo and Filippini, Ilario and Mancuso, Vincenzo and Capone, Antonio and Banchs, Albert},
  journal={IEEE/ACM Transactions on Networking},
  volume={26},
  number={5},
  pages={2361--2375},
  year={2018},
  publisher={IEEE}
}

@misc{zhong2024dualzonehardcoremodelrtscts,
      title={Dual-Zone Hard-Core Model for {RTS/CTS} Handshake Analysis in {WLANs}}, 
      author={Yi Zhong and Zhuoling Chen and Wenyi Zhang and Martin Haenggi},
      year={2024},
      eprint={2412.09953},
      archivePrefix={arXiv},
      primaryClass={cs.NI},
      url={https://arxiv.org/abs/2412.09953}, 
}

@misc{zhong2025spatialnetworkcalculusdeterministic,
      title={Spatial Network Calculus: Toward Deterministic Wireless Networking}, 
      author={Yi Zhong and Xiaohang Zhou and Ke Feng},
      year={2025},
      eprint={2501.02556},
      archivePrefix={arXiv},
      primaryClass={cs.NI},
      url={https://arxiv.org/abs/2501.02556}, 
}

@ARTICLE{Nasrallah19rll,  author={Nasrallah, Ahmed and Thyagaturu, Akhilesh S. and Alharbi, Ziyad and Wang, Cuixiang and Shao, Xing and Reisslein, Martin and ElBakoury, Hesham},  journal={IEEE Communications Surveys   Tutorials},   title={Ultra-Low Latency ({ULL}) Networks: The {IEEE TSN} and {IETF DetNet} Standards and Related {5G ULL} Research},   year={2019},  volume={21},  number={1},  pages={88-145},  doi={10.1109/COMST.2018.2869350}}

@book{bouillard2018deterministic,
  title={Deterministic network calculus: From theory to practical implementation},
  author={Bouillard, Anne and Boyer, Marc and Le Corronc, Euriell},
  year={2018},
  publisher={John Wiley \& Sons}
}

@article{pack,
author = {Y. Stoyan and G. Yaskov},
title = {Packing equal circles into a circle with circular prohibited areas},
journal = {International Journal of Computer Mathematics},
volume = {89},
number = {10},
pages = {1355--1369},
year = {2012},
publisher = {Taylor \& Francis},
doi = {10.1080/00207160.2012.685468},


URL = { 
    
        https://doi.org/10.1080/00207160.2012.685468
    
    

},
eprint = { 
    
        https://doi.org/10.1080/00207160.2012.685468
    
    

}

}

@ARTICLE{FengBaccelli2024spatial,
  author={Feng, Ke and Baccelli, François},
  journal={IEEE Transactions on Wireless Communications}, 
  title={Spatial Network Calculus and Performance Guarantees in Wireless Networks}, 
  year={2024},
  volume={23},
  number={5},
  pages={5033-5047},
  doi={10.1109/TWC.2023.3324069}}

@article{lopez2019packing,
  title={Packing a fixed number of identical circles in a circular container with circular prohibited areas},
  author={L{\'o}pez, Claudia O and Beasley, John E},
  journal={Optimization Letters},
  volume={13},
  pages={1449--1468},
  year={2019},
  publisher={Springer}
}

@ARTICLE{cruz,
  author={Cruz, R.L.},
  journal={IEEE Transactions on Information Theory}, 
  title={A calculus for network delay. {I}. Network elements in isolation}, 
  year={1991},
  volume={37},
  number={1},
  pages={114-131},
  doi={10.1109/18.61109}}

@ARTICLE{6953066,
  author={Lee, Namyoon and Lin, Xingqin and Andrews, Jeffrey G. and Heath, Robert W.},
  journal={IEEE Journal on Selected Areas in Communications}, 
  title={Power Control for {D2D} Underlaid Cellular Networks: Modeling, Algorithms, and Analysis}, 
  year={2015},
  volume={33},
  number={1},
  pages={1-13},
  doi={10.1109/JSAC.2014.2369612}}
\end{document}